\newtheorem{theorem}{Theorem}
\newtheorem{lemma}[theorem]{Lemma}
\newtheorem{problem}[theorem]{Problem}
\theoremstyle{definition}
\newtheorem{definition}[theorem]{Definition}
\newtheorem{algorithm}[theorem]{Algorithm}
\newcounter{c-save-theorem}
\newcommand{\ti}[1]{#1.\mathit{idx}}
\newcommand{\tx}[1]{#1.\mathit{x}}
\newcommand{\ty}[1]{#1.\mathit{y}}
\newcommand{\lf}[1]{#1.\mathit{first}}
\newcommand{\lb}[1]{#1.\mathit{last}}
\newcommand{\lz}[1]{#1.\mathit{size}}
\newcommand{\OO}{\mathcal{O}}
\newcommand{\vnext}{\mathit{next}}
\newcommand{\vinner}{\mathit{inner}}
\newcommand{\vouter}{\mathit{outer}}
\newcommand{\vsbest}{s_\mathrm{best}}
\newcommand{\vtbest}{t_\mathrm{best}}
\newcommand{\cpp}{C\nolinebreak\hspace{-.05em}%
    \raisebox{.4ex}{\small+}\nolinebreak\hspace{-.10em}%
    \raisebox{.4ex}{\small+}}
\newcommand{\srcurl}{http://www.maths.uq.edu.au/{\footnotesize$\sim$}bab/code/}
\begin{document}

\title{Locating regions in a sequence under density constraints}
\author{Benjamin A.\ Burton and Mathias Hiron}
\date{3 April, 2013} 

\maketitle

\begin{abstract}
    Several biological problems require the identification of
    regions in a sequence where some feature occurs within a target density
    range: examples including the location of GC-rich regions, identification
    of CpG islands, and sequence matching.
    Mathematically, this corresponds
    to searching a string of 0s and 1s for a substring whose relative
    proportion of 1s lies between given lower and upper bounds.
    We consider the algorithmic problem of locating the longest such substring,
    as well as other related problems (such as finding the shortest substring
    or a maximal set of disjoint substrings).
    For locating the longest such substring, we develop an algorithm
    that runs in $\OO(n)$ time, improving upon the previous best-known
    $\OO(n \log n)$ result.
    For the related problems we develop $\OO(n \log \log n)$ algorithms,
    again improving upon the best-known $\OO(n \log n)$ results.
    Practical testing verifies that our new algorithms
    enjoy significantly smaller time and memory footprints, and can process
    sequences that are orders of magnitude longer as a result.

    \medskip
    \noindent \textbf{AMS Classification}\quad
    Primary
    68W32; 
    Secondary
    92D20  

    \medskip
    \noindent \textbf{Keywords}\quad
    Algorithms, string processing, substring density, bioinformatics


\end{abstract}

\section{Introduction}

In this paper we develop fast algorithms to search a sequence
for regions in which a given feature appears within a certain density range.
Such problems are common in biological sequence analysis; examples include:
\begin{itemize}
    \item \emph{Locating GC-rich regions}, where G and C nucleotides appear
    with high frequency.  GC-richness correlates with
    factors such as gene density \cite{zoubak96-distribution},
    gene length \cite{duret95-longgenes},
    recombination rates \cite{fullerton01-recombination},
    codon usage \cite{sharp95-silence},
    and the increasing complexity of organisms
    \cite{bernardi00-isochores,hardison91-alphaglobin}.

    \item \emph{Locating CpG islands}, which have a high
    frequency of CpG dinucleotides.
    CpG islands are generally associated with promoters
    \cite{ioshikhes00-mapping,saxonov06-promoters},
    are useful landmarks for identifying
    genes \cite{larsen92-cpg}, and play a role in cancer research
    \cite{esteller02-tumor}.

    \item \emph{Sequence alignment}, where we seek
    regions in which multiple sequences have a high rate of matches
    \cite{wang03-segments}.
\end{itemize}

Further biological applications of such problems are outlined in
\cite{goldwasser05-maxdense} and \cite{lin02-length}.
Such problems also have applications in other fields,
such as cryptography \cite{boztas09-density} and
image processing \cite{greenberg03-heavydense}.

We represent a sequence as a string of 0s and 1s (where 1 indicates the
presence of the feature that we seek, and 0 indicates its absence).
For instance, when locating GC-rich regions we let 1 and 0 denote GC and
TA pairs respectively.
For any substring, we define its \emph{density} to be the
relative proportion of 1s (which is a fraction between 0 and 1).
Our \emph{density constraint} is the following:
given bounds $\theta_1$ and $\theta_2$ with $\theta_1 < \theta_2$,
we wish to locate substrings whose density lies between
$\theta_1$ and $\theta_2$ inclusive.

The specific values of the bounds $\theta_1,\theta_2$
depend on the particular application.
For instance, CpG islands can be divided into classes according to their
relationships with transcriptional start sites \cite{ioshikhes00-mapping},
and each class is found to have its own characteristic range of GC content.
Likewise, isochores in the human genome can be classified into five
families, each exhibiting different ranges of GC-richness
\cite{bernardi00-isochores,zoubak96-distribution}.

We consider three problems in this paper:
\begin{enumerate}
    \item[(a)] locating the \emph{longest} substring with density
    in the given range;
    \item[(b)] locating the \emph{shortest} substring with density
    in the given range, allowing optional constraints on
    the substring length;
    \item[(c)] locating a \emph{maximal cardinality set} of disjoint
    substrings whose densities all lie in the given range, again
    with optional length constraints.
\end{enumerate}

The prior state of the art for these problems is described
by Hsieh et~al.\ \cite{hsieh08-interval}, who present $\OO(n \log n)$ algorithms
in all three cases.  In this paper we improve the time complexities of these
problems to $\OO(n)$, $\OO(n \log \log n)$ and $\OO(n \log \log n)$
respectively.  In particular, our $\OO(n)$ algorithm for problem~(a)
has the fastest asymptotic complexity possible.

Experimental testing on human genomic data verifies that our new
algorithms run significantly faster and require considerably less
memory than the prior state of the art,
and can process sequences that are orders of magnitude longer
as a result.

Hsieh et~al.\ \cite{hsieh08-interval} consider a more general setting for
these problems:
instead of 0s and 1s they consider strings of real numbers (whereupon
``ratio of 1s'' becomes ``average value'').
In their setting they prove
a theoretical \emph{lower bound} of $\Omega(n \log n)$ time on all three
problems.  The key feature that allows us to
break through their lower bound in this paper is the discrete
(non-continuous) nature of structures such as DNA; in other words, the
ability to represent them as strings over a finite alphabet.

Many other problems related to feature density are studied in the literature.
Examples include \emph{maximising} density under a length constraint
\cite{goldwasser05-maxdense,greenberg03-heavydense,lin02-length},
finding \emph{all} substrings under range of density and length
constraints \cite{hsieh08-interval,huang94-dna},
finding the longest substring whose density matches a \emph{precise}
value \cite{boztas09-density,burton11-density},
and one-sided variants of our first problem with
a lower density bound $\theta_1$ but no upper density bound $\theta_2$
\cite{allison03-longest,burton11-density,chen05-optimal,
hsieh08-interval,wang03-segments}.

We devote the first half of this paper to our $\OO(n)$ algorithm for
locating the longest substring with density in a given range:
Section~\ref{s-framework} develops the mathematical framework,
and Sections~\ref{s-algm} and~\ref{s-impl}
describe the algorithm and present performance testing.
In Section~\ref{s-related} we adapt our techniques for the remaining
two problems.

All time and space complexities in this paper are based on the
commonly-used \emph{word RAM model} \cite[\S 2.2]{cormen09-algorithms},
which is reasonable for modern computers.
In essence, if $n$ is the input size, we assume
that each $(\log n)$-bit integer takes constant space
(it fits into a single \emph{word}) and that simple arithmetical operations
on $(\log n)$-bit integers (words) take constant time.


\section{Mathematical Framework} \label{s-framework}

We consider a string of digits $z_1,\ldots,z_n$, where each $z_i$ is
either 0 or 1.
The \emph{length} of a substring $z_a,\ldots,z_b$ is defined to be
$L(a,b) = b-a+1$ (the number of digits it contains), and the
\emph{density} of a substring $z_a,\ldots,z_b$ is defined to be
$D(a,b) = \sum_{i=a}^b z_i / L(a,b)$ (the relative proportion of 1s).
It is clear that the density always lies in the range
$0 \leq D(a,b) \leq 1$.

Our first problem is to find the maximum length substring whose density
lies in a given range.  Formally:

\begin{problem} \label{p-range}
    Given a string $z_1,\ldots,z_n$ as described above and two
    rational numbers $\theta_1 = c_1/d_1$ and
    $\theta_2 = c_2/d_2$, compute
    \[ \max_{1 \leq a \leq b \leq n} \left\{
        L(a,b)\,|\,\theta_1 \leq D(a,b) \leq \theta_2 \right\}. \]
    We assume that $0 < \theta_1 < \theta_2 < 1$,
    that $0 < d_1,d_2 \leq n$, and that
    $\gcd(c_1,d_1) = \gcd(c_2,d_2) = 1$.
\end{problem}

For example, if the input string is $1100010101$ (with $n=10$)
and the bounds are $\theta_1=1/4$ and $\theta_2=1/3$ then
the maximum length is $7$.  This is attained by the substring
$11\underline{0001010}1$ ($a=3$ and $b=9$), which has
density $D(3,9) = 2/7 \simeq 0.286$.

The additional assumptions in Problem~\ref{p-range}
are harmless.  If $\theta_1=0$ or $\theta_2=1$ then
the problem reduces to a one-sided bound, for which simpler
linear-time algorithms are already known
\cite{allison03-longest,chen05-optimal,wang03-segments}.
If $\theta_1=\theta_2$ then the problem reduces to matching a precise
density, for which a linear-time algorithm is also known
\cite{burton11-density}.
If some $\theta_i$ is irrational or if some $d_i > n$, we can adjust
$\theta_i$ to a nearby rational for which $d_i \leq n$ without
affecting the solution.

We consider two geometric representations, each of which describes
the string $z_1,\ldots,z_n$ as a path in two-dimensional space.
The first is the natural representation, defined as follows.

\begin{definition}
    Given a string $z_1,\ldots,z_n$ as described above,
    the \emph{natural representation} is the sequence of
    $n+1$ points $\mathbf{p}_0,\ldots,\mathbf{p}_n$ where
    each $\mathbf{p}_k$ has coordinates $(k,\ \sum_{i=1}^k z_k)$.
\end{definition}

\begin{figure}[htb]
    \centering
    \includegraphics[scale=0.9]{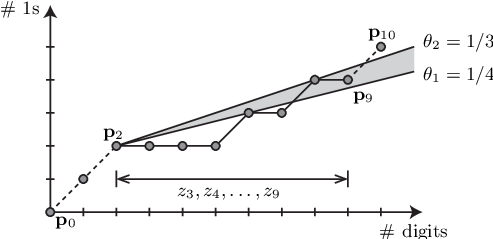}
    \caption{The natural representation of the string $1100010101$}
    \label{fig-plot-native}
\end{figure}

The $x$ and $y$ coordinates of $\mathbf{p}_k$
effectively measure the number of
digits and the number of 1s respectively in the prefix string
$z_1,\ldots,z_k$.  See Figure~\ref{fig-plot-native} for an illustration.

The natural representation is useful because densities have a clear
geometric interpretation:

\begin{lemma}
    In the natural representation, the density $D(a,b)$ is the gradient
    of the line segment joining $\mathbf{p}_{a-1}$ with $\mathbf{p}_b$.
\end{lemma}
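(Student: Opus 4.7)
The plan is to prove this by a direct coordinate computation, since the lemma is essentially an unpacking of the definitions. First I would write down the coordinates of the two endpoints using the definition of the natural representation: $\mathbf{p}_{a-1} = (a-1,\ \sum_{i=1}^{a-1} z_i)$ and $\mathbf{p}_b = (b,\ \sum_{i=1}^{b} z_i)$.

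Next I would compute the gradient of the segment joining them using the standard rise-over-run formula, giving
\[
\frac{\sum_{i=1}^{b} z_i - \sum_{i=1}^{a-1} z_i}{b - (a-1)} = \frac{\sum_{i=a}^{b} z_i}{b - a + 1}.
\]
The numerator telescopes to the count of $1$s in the substring $z_a,\ldots,z_b$, and the denominator is exactly $L(a,b)$, so the expression equals $D(a,b)$ by definition. A quick sanity check: since $a \leq b$ we have $b - (a-1) \geq 1 > 0$, so the gradient is well-defined (no division by zero).

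There is no real obstacle here; the only subtlety worth flagging explicitly is the index shift from $a$ to $a-1$ in the first endpoint, which is precisely what makes the telescoping sum start at $i=a$ rather than $i=a+1$ and thus include $z_a$ in the count. Everything else follows mechanically from the definitions given earlier in the section.
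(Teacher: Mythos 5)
Your proof is correct and is exactly the direct unpacking of the definitions that the paper intends: the paper simply states that the result ``follows directly from the definition of $D(a,b)$'', and your rise-over-run computation with the telescoping sum is the standard way to make that precise. Nothing further is needed.
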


The proof follows directly from the definition of $D(a,b)$.
The shaded cone in Figure~\ref{fig-plot-native} shows how,
for our example problem, the gradient of the line segment joining
$\mathbf{p}_2$ with $\mathbf{p}_9$ (i.e., the density $D(3,9)$)
lies within our target range $[\theta_1,\theta_2] = [1/4,1/3]$.

Our second geometric representation is the
\emph{orthogonal representation}.  Intuitively,
this is obtained by shearing the
previous diagram so that lines of slope $\theta_1$ and $\theta_2$
become horizontal and vertical respectively, as shown in
Figure~\ref{fig-plot-orth}.
Formally, we define it as follows.

\begin{figure}[htb]
    \centering
    \includegraphics[scale=0.9]{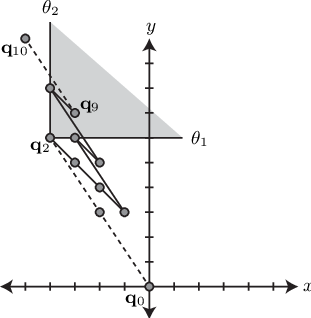}
    \caption{The orthogonal representation of the string $1100010101$
    for $[\theta_1,\theta_2] = [1/4,1/3]$}
    \label{fig-plot-orth}
\end{figure}

\begin{definition}
    Given a string $z_1,\ldots,z_n$ and rational numbers
    $\theta_1=c_1/d_1$ and $\theta_2=c_2/d_2$ as described earlier,
    the \emph{orthogonal representation} is the sequence of
    $n+1$ points $\mathbf{q}_0,\ldots,\mathbf{q}_n$, where
    each $\mathbf{q}_k$ has coordinates
    $( c_2 k - d_2 \sum_{i=1}^k z_i,
    \ -c_1 k + d_1 \sum_{i=1}^k z_i)$.
\end{definition}

From this definition we obtain the following immediate result.
\begin{lemma} \label{l-q-induct}
    $\mathbf{q}_0=(0,0)$, and for $i>0$ we have
    $\mathbf{q}_i = \mathbf{q}_{i-1} + (c_2,-c_1)$ if $z_i=0$ or
    $\mathbf{q}_i = \mathbf{q}_{i-1} + (c_2-d_2,d_1-c_1)$ if $z_i=1$.
\end{lemma}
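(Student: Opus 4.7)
The plan is to verify both claims directly from the definition of $\mathbf{q}_k$ by a short computation; there is no real obstacle here, the lemma is essentially unpacking notation.

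First I would handle the base case. Setting $k=0$ in the definition makes the sum $\sum_{i=1}^0 z_i$ empty, hence equal to $0$, so both coordinates of $\mathbf{q}_0$ collapse to $0$, giving $\mathbf{q}_0 = (0,0)$.

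For the inductive step I would fix $i > 0$, write $S_k = \sum_{j=1}^k z_j$, and subtract coordinate-wise. Using $S_i - S_{i-1} = z_i$, the difference becomes
\[
\mathbf{q}_i - \mathbf{q}_{i-1}
  = \bigl(c_2\,(i-(i-1)) - d_2\,(S_i - S_{i-1}),\ -c_1\,(i-(i-1)) + d_1\,(S_i - S_{i-1})\bigr)
  = (c_2 - d_2 z_i,\ -c_1 + d_1 z_i).
\]
Then I would split on the two possible values of $z_i$: if $z_i = 0$ the displacement simplifies to $(c_2, -c_1)$, and if $z_i = 1$ it simplifies to $(c_2 - d_2,\ d_1 - c_1)$, matching the two cases in the statement.

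The only thing to watch is the sign convention in the second coordinate ($-c_1 k + d_1 S_k$ rather than $c_1 k - d_1 S_k$), so I would be careful to keep the minus sign in place when reading off the $z_i = 1$ case as $d_1 - c_1$ rather than $c_1 - d_1$. Beyond that, no induction, case analysis, or auxiliary result is needed.
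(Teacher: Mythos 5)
Your computation is correct and is exactly the argument the paper has in mind: it states the lemma as an ``immediate result'' from the definition, and your coordinate-wise subtraction using $S_i - S_{i-1} = z_i$ simply makes that explicit. Both the base case and the two displacement cases check out, so nothing further is needed.
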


The key advantage of the orthogonal representation is that
densities in the target range $[\theta_1,\theta_2]$
correspond to \emph{dominating points} in our new coordinate system.
Here we use a non-strict definition of domination:
a point $(x,y)$ is said to
\emph{dominate} $(x',y')$ if and only if both $x \geq x'$ and $y \geq y'$.

\begin{theorem} \label{t-density-dom}
    The density of the substring $z_a,\ldots,z_b$ satisfies
    $\theta_1 \leq D(a,b) \leq \theta_2$ if and only
    if $\mathbf{q}_b$ dominates $\mathbf{q}_{a-1}$.
\end{theorem}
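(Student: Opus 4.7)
The plan is a direct computation: unpack the definition of $\mathbf{q}_k$ and show that the two coordinate inequalities in the dominance condition are exactly the two density bounds in disguise.

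First I would introduce the abbreviations $L = L(a,b) = b-a+1$ for the length and $T = \sum_{i=a}^b z_i$ for the number of 1s in the substring, so that $D(a,b) = T/L$. Writing $S_k = \sum_{i=1}^k z_i$, the definition of the orthogonal representation gives
\[
  \mathbf{q}_b - \mathbf{q}_{a-1}
  = \bigl(c_2(b-(a-1)) - d_2(S_b - S_{a-1}),\ -c_1(b-(a-1)) + d_1(S_b - S_{a-1})\bigr)
  = (c_2 L - d_2 T,\ -c_1 L + d_1 T).
\]

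Next I would simplify each coordinate inequality in the dominance condition. Since $d_1, d_2 > 0$ (we have $0 < d_1, d_2 \leq n$ from Problem~\ref{p-range}) and $L > 0$, the $x$-coordinate inequality $c_2 L - d_2 T \geq 0$ is equivalent to $T/L \leq c_2/d_2 = \theta_2$, and the $y$-coordinate inequality $-c_1 L + d_1 T \geq 0$ is equivalent to $T/L \geq c_1/d_1 = \theta_1$. Conjoining these gives $\theta_1 \leq D(a,b) \leq \theta_2$, and the equivalence is in both directions at each step.

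There is no real obstacle here; the only care needed is to note that multiplying or dividing by $L$, $d_1$, and $d_2$ preserves the direction of the inequalities because all three are strictly positive. This is exactly the reason the shearing transformation in the definition of the orthogonal representation was designed with those specific coefficients.
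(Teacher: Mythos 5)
Your proof is correct and follows essentially the same route as the paper's: compute the coordinates of $\mathbf{q}_b - \mathbf{q}_{a-1}$, factor out the length, and observe that non-negativity of each coordinate is equivalent to the corresponding density bound. Your explicit remark that $L$, $d_1$, $d_2 > 0$ justifies the division is a small extra care that the paper leaves implicit.
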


\begin{proof}
    The difference $\mathbf{q}_b - \mathbf{q}_{a-1}$ has coordinates
    \begin{eqnarray*}
    & &
        ( c_2 (b-a+1) - d_2 \sum_{i=a}^b z_i,
        \ -c_1 (b-a+1) + d_1 \sum_{i=a}^b z_k) \nonumber \\
    &=&
        L(a,b) \cdot \left( c_2 - d_2 D(a,b),\ -c_1 + d_1 D(a,b) \right),
    \end{eqnarray*}
    which are both non-negative if and only if
    $D(a,b) \leq c_2/d_2 = \theta_2$ and
    $D(a,b) \geq c_1/d_1 = \theta_1$.
\end{proof}

The shaded cone in Figure~\ref{fig-plot-orth} shows how $\mathbf{q}_9$
dominates $\mathbf{q}_2$ in our example, indicating that the
substring $z_3,\ldots,z_9$ has a density in the range $[1/4,1/3]$.

It follows that Problem~\ref{p-range} can be reinterpreted as:

\setcounter{c-save-theorem}{\arabic{theorem}}
\renewcommand{\thetheorem}{\ref{p-range}$'$}
\begin{problem} \label{p-range-dom}
    Given the orthogonal representation $\mathbf{q}_0,\ldots,\mathbf{q}_n$
    as defined above, find points
    $\mathbf{q}_s,\mathbf{q}_t$ for which
    $\mathbf{q}_t$ dominates $\mathbf{q}_s$ and
    $t-s$ is as large as possible.
\end{problem}
\renewcommand{\thetheorem}{\arabic{theorem}}
\setcounter{theorem}{\arabic{c-save-theorem}}

The corresponding substring that solves Problem~\ref{p-range}
is $z_{s+1},\ldots,z_t$.

We finish this section with two properties of the orthogonal
representation that are key to obtaining a linear time algorithm
for this problem.

\begin{lemma} \label{l-int}
    The coordinates of each point $\mathbf{q}_i$ are integers in the
    range $[-n^2,n^2]$.
\end{lemma}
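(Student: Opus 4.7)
The plan is to read the bounds straight off the coordinate formulas, using the assumptions collected in Problem~\ref{p-range}. Integrality is essentially immediate, so the work is entirely in the magnitude estimate.

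First I would observe that each coordinate of $\mathbf{q}_i$ is an integer linear combination of $i$, the $z_j$ and the four parameters $c_1,c_2,d_1,d_2$, all of which are integers; hence $\mathbf{q}_i \in \mathbb{Z}^2$. Next I would introduce the partial sum $S_k = \sum_{i=1}^k z_i$ so that $\mathbf{q}_k = (c_2 k - d_2 S_k,\ -c_1 k + d_1 S_k)$, and record the elementary facts $0 \le S_k \le k \le n$, together with $0 < c_1 < d_1 \le n$ and $0 < c_2 < d_2 \le n$ (these inequalities come from $0 < \theta_1 < \theta_2 < 1$ and from $d_1,d_2 \le n$, as stipulated in the problem statement).

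With those in hand I would bound each of the two coordinates by splitting them into their positive and negative pieces. For the first coordinate: the nonnegative term $c_2 k$ is at most $c_2 n \le n^2$ (using $c_2 \le n$), while the nonpositive term $-d_2 S_k$ is at least $-d_2 n \ge -n^2$; each of the two pieces is individually sandwiched in $[-n^2,n^2]$, and since they have opposite sign, so is their sum. The argument for the second coordinate is identical with the roles of $c$ and $d$ swapped: $d_1 S_k \le n^2$ and $-c_1 k \ge -n^2$.

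There is no real obstacle here; the only point that requires a line of justification is why $c_1,c_2 \le n$, which follows from the assumptions $\theta_1 < \theta_2 < 1$ (forcing $c_i < d_i$) combined with $d_i \le n$.
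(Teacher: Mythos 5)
Your proof is correct, but it takes a different route from the paper's. The paper proves this lemma as a one\nobreakdash-line consequence of Lemma~\ref{l-q-induct}: each $\mathbf{q}_i$ is obtained from $\mathbf{q}_{i-1}$ by adding an integer vector whose entries lie in $[-n,n]$ (namely $(c_2,-c_1)$ or $(c_2-d_2,\,d_1-c_1)$), so after at most $n$ steps starting from $(0,0)$ the coordinates stay in $[-n^2,n^2]$. You instead bound the closed-form expression $\mathbf{q}_k = (c_2 k - d_2 S_k,\ -c_1 k + d_1 S_k)$ directly, splitting each coordinate into its nonnegative and nonpositive pieces and bounding each by $n^2$ in absolute value; this is valid, and your justification of $0 < c_i < d_i \le n$ from $0 < \theta_1 < \theta_2 < 1$ and $d_i \le n$ is exactly the fact the paper's argument also silently relies on (to get the increments into $[-n,n]$). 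The paper's version is marginally slicker because Lemma~\ref{l-q-induct} is already on the table and the telescoping is immediate; your version is self-contained and makes the dependence on the assumptions of Problem~\ref{p-range} explicit, which is arguably a small expository gain. Either proof would serve.
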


\begin{proof}
    This follows directly from Lemma~\ref{l-q-induct}:
    $\mathbf{q}_0 = (0,0)$, and the coordinates of each subsequent
    $\mathbf{q}_i$ are obtained by adding integers in the range
    $[-n,n]$ to the coordinates of $\mathbf{q}_{i-1}$.
\end{proof}

\begin{lemma} \label{l-moveout}
    If $\mathbf{q}_i$ dominates $\mathbf{q}_j$ then $i \geq j$.
\end{lemma}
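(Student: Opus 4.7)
The plan is to argue by contradiction, reducing the claim to the same algebraic identity that drove the proof of Theorem~\ref{t-density-dom}. Suppose $\mathbf{q}_i$ dominates $\mathbf{q}_j$; the case $i=j$ is immediate, so assume $i<j$ for contradiction. I would then compute $\mathbf{q}_j-\mathbf{q}_i$ by the same expansion used in Theorem~\ref{t-density-dom}, applied to the genuine substring $z_{i+1},\ldots,z_j$ of positive length $L(i+1,j)=j-i>0$. This gives
\[
\mathbf{q}_j-\mathbf{q}_i \;=\; (j-i)\cdot\bigl(c_2-d_2 D(i+1,j),\ -c_1+d_1 D(i+1,j)\bigr),
\]
so that $\mathbf{q}_i-\mathbf{q}_j$ has coordinates with the opposite signs, i.e.\ both coordinates of $\mathbf{q}_i-\mathbf{q}_j$ are non-negative exactly when $c_2-d_2 D(i+1,j)\leq 0$ and $-c_1+d_1 D(i+1,j)\leq 0$.

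Translating these two inequalities back into bounds on the density yields $D(i+1,j)\geq c_2/d_2=\theta_2$ and $D(i+1,j)\leq c_1/d_1=\theta_1$ simultaneously. Since Problem~\ref{p-range} assumes $\theta_1<\theta_2$, this is impossible, and the contradiction shows $i\geq j$.

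The main obstacle, such as it is, is purely bookkeeping: making sure the sign flip that comes from interchanging the roles of the two endpoints is applied consistently to both coordinates, and observing that $L(i+1,j)$ is strictly positive so that it can be divided out without reversing the inequalities. No new combinatorial or geometric content is required beyond Theorem~\ref{t-density-dom}; the lemma is essentially a restatement of the fact that the cones of ``dominates'' and ``is dominated by'' in the orthogonal coordinates correspond to disjoint density ranges $[\theta_1,\theta_2]$ and $[\theta_2,\theta_1]=\emptyset$ respectively.
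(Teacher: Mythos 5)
Your proof is correct, but it takes a different route from the paper's. You argue by contradiction: assuming $i<j$, you reuse the coordinate expansion from the proof of Theorem~\ref{t-density-dom} to write $\mathbf{q}_j-\mathbf{q}_i=(j-i)\bigl(c_2-d_2D(i+1,j),\ -c_1+d_1D(i+1,j)\bigr)$, and deduce that domination of $\mathbf{q}_j$ by $\mathbf{q}_i$ would force $D(i+1,j)\geq\theta_2$ and $D(i+1,j)\leq\theta_1$ simultaneously, which is impossible since $\theta_1<\theta_2$. The paper instead introduces the linear functional $f(x,y)=d_1x+d_2y$, observes from Lemma~\ref{l-q-induct} that $f(\mathbf{q}_i)=f(\mathbf{q}_{i-1})+d_1c_2-d_2c_1$ is strictly increasing in the index (again because $\theta_1<\theta_2$), and notes that domination immediately gives $f(\mathbf{q}_i)\geq f(\mathbf{q}_j)$, hence $i\geq j$. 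The two arguments are really two faces of the same computation: applying $f$ to your difference vector makes the density term cancel and recovers the paper's increment $d_1c_2-d_2c_1$. The paper's potential-function version is slightly cleaner in that it needs no case split on $i=j$ versus $i<j$ and no sign bookkeeping; your version has the virtue of making the geometric content explicit, namely that the ``is dominated by'' cone corresponds to the empty density range. Both are complete and correct.
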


\begin{proof}
    Consider the linear function
    $f \colon \thinspace \mathbb{R}^2 \to \mathbb{R}$
    defined by $f(x,y) = d_1 x + d_2 y$.  It is clear from
    Lemma~\ref{l-q-induct} that $f(\mathbf{q}_0) = 0$ and
    $f(\mathbf{q}_i) = f(\mathbf{q}_{i-1}) + d_1c_2 - d_2c_1$.
    Since $\theta_1 = c_1/d_1 < c_2/d_2 = \theta_2$ it follows that
    $f(\mathbf{q}_i) > f(\mathbf{q}_{i-1})$.

    Suppose $\mathbf{q}_i$ dominates $\mathbf{q}_j$.
    By definition of $f$ we have
    $f(\mathbf{q}_i) \geq f(\mathbf{q}_j)$, and by the observation
    above it follows that $i \geq j$.
\end{proof}


\section{Algorithm} \label{s-algm}

To solve Problem~\ref{p-range-dom} we construct and then scan along
the \emph{inner} and \emph{outer frontiers}, which we define as follows.

\begin{definition}
    Consider the orthogonal representation
    $\mathbf{q}_0,\ldots,\mathbf{q}_n$ for the input string
    $z_1,\ldots,z_n$.
    The \emph{inner frontier} is the set of points $\mathbf{q}_k$
    that do not dominate any $\mathbf{q}_i$ for $i \neq k$.
    The \emph{outer frontier} is the set of points $\mathbf{q}_k$
    that are not dominated by any $\mathbf{q}_i$ for $i \neq k$.
\end{definition}

\begin{figure}[htb]
    \centering
    \includegraphics[scale=0.9]{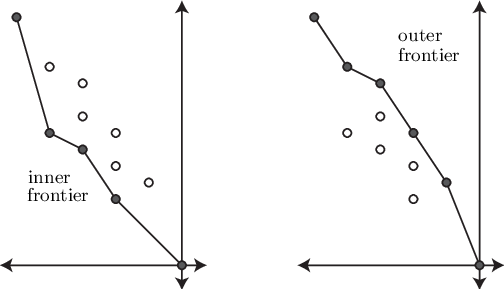}
    \caption{The inner and outer frontiers}
    \label{fig-frontier}
\end{figure}

Figure~\ref{fig-frontier} illustrates both of these sets.
They are algorithmically important because of the following result.

\begin{lemma}
    If $\mathbf{q}_s$ and $\mathbf{q}_t$ form a solution to
    Problem~\ref{p-range-dom}, then $\mathbf{q}_s$ lies on the inner
    frontier and $\mathbf{q}_t$ lies on the outer frontier.
\end{lemma}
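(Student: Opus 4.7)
The plan is to prove both assertions by contradiction, using the optimality of $(s,t)$ together with Lemma~\ref{l-moveout} to derive a strictly longer valid pair.

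First I would handle the inner frontier claim. Suppose $\mathbf{q}_s$ were not on the inner frontier. Then by definition there exists some index $i \neq s$ such that $\mathbf{q}_s$ dominates $\mathbf{q}_i$. By Lemma~\ref{l-moveout} this forces $s \geq i$, and since $i \neq s$ we actually have $i < s$. Now since domination is transitive (each coordinate inequality composes), the facts that $\mathbf{q}_t$ dominates $\mathbf{q}_s$ and $\mathbf{q}_s$ dominates $\mathbf{q}_i$ imply that $\mathbf{q}_t$ dominates $\mathbf{q}_i$. Hence $(\mathbf{q}_i,\mathbf{q}_t)$ is a valid solution to Problem~\ref{p-range-dom} with difference $t - i > t - s$, contradicting the maximality of $t-s$.

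The outer frontier claim is entirely symmetric. If $\mathbf{q}_t$ were not on the outer frontier, some $\mathbf{q}_j$ with $j \neq t$ would dominate $\mathbf{q}_t$; Lemma~\ref{l-moveout} gives $j \geq t$ and hence $j > t$. Again by transitivity $\mathbf{q}_j$ dominates $\mathbf{q}_s$, so $(\mathbf{q}_s,\mathbf{q}_j)$ is a valid pair with $j - s > t - s$, contradicting optimality.

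I do not foresee any serious obstacle: the only non-trivial ingredient is Lemma~\ref{l-moveout}, which ensures that whenever one point dominates another, the dominating point has the larger index, so that every application of domination we introduce lengthens rather than shortens the candidate substring. Transitivity of componentwise $\geq$ is immediate, so the argument is really just a two-line contradiction in each direction.
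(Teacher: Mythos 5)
Your proposal is correct and follows essentially the same argument as the paper: contradiction via Lemma~\ref{l-moveout} and transitivity of domination to produce a pair with a strictly larger index difference. The paper leaves the transitivity step implicit, but otherwise the two proofs are identical.
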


\begin{proof}
    If $\mathbf{q}_s$ is not on the inner frontier then
    $\mathbf{q}_s$ dominates $\mathbf{q}_i$ for some $i \neq s$.
    By Lemma~\ref{l-moveout} we have $i < s$, which means that
    $\mathbf{q}_s$ and $\mathbf{q}_t$ cannot solve Problem~\ref{p-range-dom}
    since $\mathbf{q}_t$ dominates $\mathbf{q}_i$ and $t-i > t-s$.
    The argument for $\mathbf{q}_t$ on the outer frontier is similar.
\end{proof}


\subsection{Data structures}

The data structures that appear in this algorithm are simple.

For each point $\mathbf{q}_i=(x_i,y_i)$, we refer to $i$ as the
\emph{index} of $\mathbf{q}_i$, and we store the point as a
triple $(i,x_i,y_i)$.  If $t$ is such a triple, we refer to its
three constituents as $\ti{t}$, $\tx{t}$ and $\ty{t}$ respectively.

We make frequent use of \emph{lists} of triples.
If $L$ is such a list, we refer to the first and last triples in $L$
as $\lf{L}$ and $\lb{L}$ respectively, we denote the number of
triples in $L$ by $\lz{L}$, and we denote the individual
triples in $L$ by $L[0]$, $L[1]$, \ldots, $L[\lz{L}-1]$.
All lists are assumed to have $\OO(1)$ insertion and deletion at the
beginning and end, and $\OO(\lz{L})$ iteration through the elements in order
from first to last (as provided, for example, by a doubly-linked list).

For convenience we may write $\mathbf{q}_i \in L$ to indicate that
the triple describing $\mathbf{q}_i$ is contained in $L$; formally,
this means $(i,x_i,y_i) \in L$.


\subsection{The two-phase radix sort} \label{s-radix}

The algorithm make use of a \emph{two-phase radix sort} which,
given a list of $\ell$ integers in the range $[0,b^2)$,
allows us to sort these integers in $\OO(\ell+b)$ time and space.
In brief, the two-phase radix sort operates as follows.

Since the integers are in
the range $[0, b^2)$, we can express each integer $k$ as a
``two-digit number'' in base $b$; in other words, a pair $(\alpha,\beta)$
where $k=\alpha + \beta \cdot b$ and $\alpha,\beta$ are integers
in the range $0 \leq \alpha,\beta < b$.

We create an array of $b$ ``buckets'' (linked lists) in memory for each
possible value of $\alpha$.  In a first pass, we use a counting sort to
order the integers by increasing $\alpha$ (the least significant digit):
this involves looping through the integers to place each integer
in the bucket corresponding to the digit $\alpha$ (a total of
$\ell$ distinct $\OO(1)$ list insertion operations), and then
looping through the buckets to extract the integers in order of $\alpha$
(effectively concatenating $b$ distinct lists with total length $\ell$).
This first pass takes $\OO(\ell + b)$ time in total.

We then make a second pass using a similar approach, using another
$\OO(\ell + b)$ counting sort to order the integers
by increasing $\beta$ (the most significant digit).
Importantly, the counting sort is stable and so the final result has the
integers sorted by $\beta$ and then $\alpha$; that is, in numerical order.
The total running time is again $\OO(\ell + b)$, and since we have $b$
buckets with a total of $\ell$ elements, the space complexity is likewise
$\OO(\ell + b)$.

In our application, we need to sort
a list of $n+1$ integers in the range $[-n^2,n^2]$;
this can be translated into the setting above with $\ell=n+1$ and $b=2n$,
and so the two-phase radix sort has $\OO(n)$ time and space complexity.

This is a specific case of the more general radix sort; for further
details the reader is referred to a standard algorithms text such as
\cite{cormen09-algorithms}.


\subsection{Constructing frontiers}

The first stage in solving Problem~\ref{p-range-dom}
is to construct the inner and outer
frontiers in sorted order, which we do efficiently as follows.
The corresponding pseudocode is given in Figure~\ref{pc-frontier}.

\begin{algorithm} \label{a-frontier}
    To construct the inner frontier $I$ and the outer frontier $O$,
    both in order by increasing $x$ coordinate:
    \begin{enumerate}
        \item Build a list $L$ of triples corresponding to all
        $n+1$ points $\mathbf{q}_0,\ldots,\mathbf{q}_n$,
        using Lemma~\ref{l-q-induct}.
        Sort this list by increasing $x$ coordinate using a
        two-phase radix sort as described above,
        noting that the sort keys $x_i$ are all integers in the
        range $[-n^2,n^2]$ (Lemma~\ref{l-int}).

        \item \label{en-frontier-i}
        Initialise $I$ to the one-element list $[\lf{L}]$.
        Step through $L$ in forward order (from left to right
        in the diagram); for each triple
        $\ell \in L$ that has lower $y$ than any triple seen before,
        append $\ell$ to the end of $I$.

        \item \label{en-frontier-o}
        Construct $O$ in a similar fashion, working through $L$
        in reverse order (from right to left).
    \end{enumerate}

    In step~\ref{en-frontier-i}, there is a
    complication if we append a new triple $\ell$ to $I$
    for which $\tx{\ell} = \tx{\lb{I}}$.
    Here we must first remove $\lb{I}$ since $\ell$ makes it obsolete.
    See lines \ref{line-domi-start}--\ref{line-domi-end}
    of Figure~\ref{pc-frontier} for the details.

\begin{figure}[t]
    \centering
    \setlength{\fboxsep}{0.5\baselineskip}
    \framebox{\begin{minipage}{0.9\textwidth}
    \begin{algorithmic}[1]
        \State{$L \gets [(0,0,0)]$}
        \For{$i \gets 1$ \textbf{to} $n$}
            \If{$z_i = 0$}
                \State{Append $\lb{L} + (1,c_2,-c_1)$ to the end of $L$}
            \Else
                \State{Append $\lb{L} + (1,c_2-d_2,d_1-c_1)$ to the end of $L$}
            \EndIf
        \EndFor
        \State{Sort $L$ by increasing $x$ using a two-phase radix sort}

        \Statex

        \State $I \gets [ \lf{L} ]$
        \ForAll{$\ell \in L$, moving forward through $L$}
            \If{$\ty{\ell} < \ty{\lb{I}}$}
                \If{$\tx{\ell} = \tx{\lb{I}}$}
                        \Comment{$\lb{I}$ dominates $\ell$}
                        \label{line-domi-start}
                    \State{Remove the last triple from $I$}
                        \label{line-domi-end}
                \EndIf
                \State{Append $\ell$ to the end of $I$}
            \EndIf
        \EndFor

        \Statex

        \State $O \gets [ \lb{L} ]$
        \ForAll{$\ell \in L$, moving backwards through $L$}
            \If{$\ty{\ell} > \ty{\lf{O}}$}
                \If{$\tx{\ell} = \tx{\lf{O}}$}
                        \Comment{$\ell$ dominates $\lf{O}$}
                        \label{line-domo-start}
                    \State{Remove the first triple from $O$}
                        \label{line-domo-end}
                \EndIf
                \State{Prepend $\ell$ to the beginning of $O$}
            \EndIf
        \EndFor
    \end{algorithmic}
    \end{minipage}}
    \caption{The pseudocode for Algorithm~\ref{a-frontier}}
    \label{pc-frontier}
\end{figure}
\end{algorithm}

Table~\ref{tab-frontier-i} shows a worked example for
step~\ref{en-frontier-i} of the algorithm, i.e., the construction
of the inner frontier.  The points in this example correspond to
Figure~\ref{fig-frontier}, and each row of the table shows how the
frontier $I$ is updated when processing the next triple $\ell \in L$
(for simplicity we only show the coordinate pairs $(x_i,y_i)$ from each
triple).
Note that, although $L$ is sorted by increasing $x$ coordinate, for
each fixed $x$ coordinate the corresponding $y$ coordinates
may appear in arbitrary order.

\begin{table}\small
    \centering
    \begin{tabular}{l|l}
    Coordinates $(x_i, y_i)$ & Current inner frontier $I$ \\
    from the triple $\ell \in L$ & \\
    \hline
    \qquad$(-10,15)$ & $[\ (-10,15)\ ]$ \\
    \qquad$(-8,\phantom{0}12)$ & $[\ (-10,15),\ (-8,12)\ ]$ \\
    \qquad$(-8,\phantom{0}8)$ &  $[\ (-10,15)\ ]$ \\
                              &  $[\ (-10,15),\ (-8,8)\ ]$ \\
    \qquad$(-6,\phantom{0}9)$ & $[\ (-10,15),\ (-8,8)\ ]$ \\
    \qquad$(-6,\phantom{0}7)$ & $[\ (-10,15),\ (-8,8)\ (-6,7)\ ]$ \\
    \qquad$(-6,\phantom{0}11)$ & $[\ (-10,15),\ (-8,8)\ (-6,7)\ ]$ \\
    \qquad$(-4,\phantom{0}6)$ & $[\ (-10,15),\ (-8,8)\ (-6,7)\ (-4,6)\ ]$ \\
    \qquad$(-4,\phantom{0}8)$ & $[\ (-10,15),\ (-8,8)\ (-6,7)\ (-4,6)\ ]$\\
    \qquad$(-4,\phantom{0}4)$ & $[\ (-10,15),\ (-8,8)\ (-6,7)\ ]$\\
                              & $[\ (-10,15),\ (-8,8)\ (-6,7)\ (-4,4)\ ]$\\
    \qquad$(-2,\phantom{0}5)$ & $[\ (-10,15),\ (-8,8)\ (-6,7)\ (-4,4)\ ]$\\
    \qquad$(-0,\phantom{0}0)$ & $[\ (-10,15),\ (-8,8)\ (-6,7)\ (-4,4)\ (0,0)\ ]$
    \end{tabular}
    \caption{Constructing the inner frontier}
    \label{tab-frontier-i}
\end{table}

\begin{theorem} \label{t-frontier}
    Algorithm~\ref{a-frontier} constructs the inner and outer
    frontiers in $I$ and $O$ respectively, with each list sorted
    by increasing $x$ coordinate, in $\OO(n)$ time and $\OO(n)$ space.
\end{theorem}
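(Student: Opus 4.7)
The plan is to decompose the algorithm into three stages---constructing $L$, sorting $L$, and two linear scans---and verify each in turn. The first two stages are immediate: Lemma~\ref{l-q-induct} lets us compute each successive triple from its predecessor in constant time, so building $L$ takes $\OO(n)$ time and space; and by Lemma~\ref{l-int} every $x$-coordinate is an integer in $[-n^2,n^2]$, so the two-phase radix sort of Section~\ref{s-radix} sorts $L$ in $\OO(n)$ time and space.

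The substantive task is to show that the forward pass produces the inner frontier in order of increasing $x$; the backward pass for $O$ is completely symmetric. I would proceed by induction on the number of triples processed from $L$, maintaining the invariant that $I$ consists of exactly those already-processed $\mathbf{q}_m$ satisfying the restricted inner-frontier condition (no other processed $\mathbf{q}_i$ has $x_i \leq x_m$ and $y_i \leq y_m$), listed in strictly increasing $x$ and hence strictly decreasing $y$, with $\lb{I}$ achieving the minimum $y$ over the entire processed prefix. Since $L$ is sorted by $x$, when the next triple $\ell$ is processed every previously processed point has $x$-coordinate at most $\tx{\ell}$, so $\ell$ itself belongs to the restricted inner frontier iff no previously processed point has $y$-coordinate at most $\ty{\ell}$, which by the minimum-$y$ clause of the invariant reduces exactly to the algorithm's test $\ty{\ell} < \ty{\lb{I}}$. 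When this condition triggers, the strict increase of $x$-coordinates along $I$ implies that only $\lb{I}$ (among the points currently in $I$) can be invalidated by $\ell$, and only in the subcase $\tx{\ell} = \tx{\lb{I}}$, which is exactly what the removal at lines~\ref{line-domi-start}--\ref{line-domi-end} handles.

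The one case that requires care is when $\ty{\ell} = \ty{\lb{I}}$ coincides with $\tx{\ell} = \tx{\lb{I}}$: then $\ell$ would silently invalidate $\lb{I}$ without the algorithm reacting. This degenerate situation is ruled out by the distinctness of $\mathbf{q}_0, \ldots, \mathbf{q}_n$, which follows because the linear function $f(x,y) = d_1 x + d_2 y$ used in the proof of Lemma~\ref{l-moveout} is strictly increasing along the orthogonal sequence. With this settled, the induction closes, the final $I$ equals the full inner frontier, and the matching invariant for $O$ holds symmetrically. Each scan does $\OO(1)$ work per element, so together with the $\OO(n)$ construction and sort the total running time and space usage is $\OO(n)$; the tie-breaking step hinging on the distinctness of the $\mathbf{q}_i$ is the main subtlety of the whole argument.
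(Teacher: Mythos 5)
Your proposal is correct and follows essentially the same route as the paper: build $L$, radix-sort it using Lemma~\ref{l-int}, and argue correctness of the forward/backward scans from the fact that $L$ is sorted by $x$, with the equal-$x$ removal step as the only delicate point. You merely package the scan argument as an explicit loop invariant (and add the observation, via the functional $f$ of Lemma~\ref{l-moveout}, that the $\mathbf{q}_i$ are pairwise distinct, which the paper leaves implicit), so no substantive difference.
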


\begin{proof}
    This algorithm is based on a well-known method for constructing frontiers.
    We show here why the inner frontier $I$ is constructed correctly;
    a similar argument applies to the outer frontier $O$.

    If a triple $\ell \in L$ with coordinates $(x_i,y_i)$ does belong on
    the inner
    frontier (i.e., there is no other point $(x_j,y_j)$ in the list that it
    dominates),
    then we are guaranteed to add it to $I$ in step~\ref{en-frontier-i}
    because the only triples processed thus far have $x \leq x_i$, and
    must therefore have $y > y_i$.  Moreover, we will not subsequently
    remove $\ell$ from $I$ again,
    since the only other triples with the same $x$
    coordinate must have $y > y_i$.

    If a triple $\ell \in L$ with coordinates $(x_i,y_i)$ does not belong
    on the inner frontier, then there is some point $(x_j,y_j)$ that it
    dominates.  If $x_j<x_i$ then we never add $\ell$ to $I$, since by
    the time we process $\ell$ we will
    already have seen the coordinate $y_j$ (which is at least as low as $y_i$).
    Otherwise $x_j=x_i$ and $y_j<y_i$, and so we either add and then
    remove $\ell$ from $I$ or else never add it at all,
    depending on the order in which we process
    the points with $x$ coordinate equal to $x_i$.
    Either way, $\ell$ does not appear in the final list $I$.

    Therefore the list $I$ contains precisely the inner frontier; moreover,
    since we process the points by increasing $x$ coordinate, the list $I$
    will be sorted by $x$ accordingly.

    The main innovation in this algorithm
    is the use of a radix sort with two-digit keys, made possible by
    Lemma~\ref{l-int}, which allow us to avoid the usual $\OO(n \log n)$
    cost of sorting.
    The two-phase radix sort runs in $\OO(n)$ time and space,
    as do the subsequent
    list operations in Steps~2--4, and so the entire algorithm runs in
    $\OO(n)$ time and space as claimed.
\end{proof}


\subsection{Sliding windows}

The second stage in solving Problem~\ref{p-range-dom} is to
simultaneously scan through the inner and outer frontiers in search of
possible solutions $\mathbf{q}_s \in I$ and $\mathbf{q}_t \in O$.

We do this by trying each $\mathbf{q}_s$ in order on the inner frontier,
and maintaining a sliding window $W$ of possible points
$\mathbf{q}_t$; specifically, $W$ consists of all points on the
outer frontier that dominate $\mathbf{q}_s$.
Figure~\ref{fig-windows} illustrates this window $W$ as
$\mathbf{q}_s$ moves along the inner frontier from left to right.

\begin{figure}[htb]
    \centering
    \includegraphics[scale=0.85]{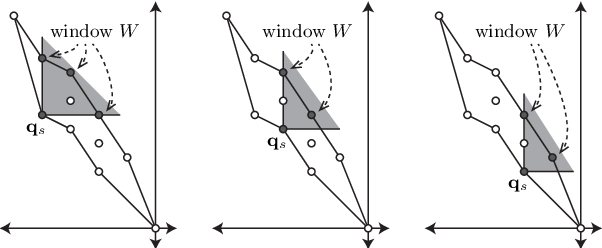}
    \caption{A sequence of sliding windows on the outer frontier}
    \label{fig-windows}
\end{figure}

For each point $\mathbf{q}_s \in I$ that we process,
it is easy to update $W$ in amortised $\OO(1)$ time using
sliding window techniques (pushing new points onto the end of the list
$W$ as they enter the window, and removing old points from the beginning
as they exit the window).
However, we still need a fast way of locating
the point $\mathbf{q}_t \in O$
that dominates $\mathbf{q}_s$ and for which $t-s$ is largest.
Equivalently, we need a fast way of choosing the triple $w \in W$ that
maximises $\ti{w}$.

To do this, we maintain a sub-list $M \subseteq W$: this is a list
consisting of all
triples $w \in W$ that are \emph{potential maxima}.  Specifically,
for any triple $w \in W$, we include $w$ in $M$ if and only if there is
no $w'\in W$ for which $\tx{w'} > \tx{w}$ and $\ti{w'} > \ti{w}$.
The rationale is that, if there were such a $w'$, we would always choose
$w'$ over $w$ in this or any subsequent window.

As with all of our lists, we keep $M$ sorted by increasing $x$
coordinate.  Note that the condition above implies that $M$ is also
sorted by \emph{decreasing} index.  In particular, the sought-after
triple $w \in W$ that maximises $\ti{w}$ is simply $\lf{M}$,
which we can access in $\OO(1)$ time.

Crucially, we can also update the sub-list $M$ in amortised $\OO(1)$ time for
each point $\mathbf{q}_s \in I$ that we process.  As a result,
this sub-list $M$ allows us to maximise $\ti{w}$ for $w \in W$ whilst
avoiding a costly linear scan through the entire window $W$.

The details are as follows; see Figure~\ref{pc-window} for the
pseudocode.

\begin{algorithm} \label{a-window}
    Let the inner and outer frontiers be stored in the lists
    $I$ and $O$ in order by increasing $x$ coordinate,
    as generated by Algorithm~\ref{a-frontier}.
    We solve Problem~\ref{p-range-dom} as follows.

    \begin{enumerate}
        \item
        Initialise $M$ to the empty list.

        \item Step through the inner frontier $I$ in forward order.
        For each triple $\vinner \in I$:
        \begin{enumerate}
            \item \emph{Process new points that enter our sliding window.}
            To do this, we scan through any new triples $\vouter \in O$
            for which $\ty{\vouter} \geq \ty{\vinner}$ and update
            $M$ accordingly.

            \smallskip

            \noindent
            Each new $\vouter \in O$ that we process has
            $\tx{\vouter} > \tx{m}$ for all $m \in M$, so we append $\vouter$
            to the end of $M$.
            However, before doing this we must remove any $m \in M$ for
            which $\ti{m} < \ti{\vouter}$ (since such triples would violate
            the definition of $M$).  Because $M$ is sorted by decreasing
            index, all such $m \in M$ can be found at the end of $M$.
            See lines \ref{line-enter-start}--\ref{line-enter-end}
            of Figure~\ref{pc-window}.

            \item \emph{Remove points from $M$ that have exited our
            sliding window.}  That is, remove triples $m \in M$
            for which $\tx{m} < \tx{\vinner}$.

            \smallskip

            \noindent
            Because $M$ is sorted by increasing $x$ coordinate,
            all such triples can be found at the beginning of $M$.
            See lines \ref{line-exit-start}--\ref{line-exit-end}
            of Figure~\ref{pc-window}.

            \item \emph{Update the solution.}
            The best solution to Problem~\ref{p-range-dom}
            that uses the triple $\vinner \in I$ is the
            pair of points
            $\mathbf{q}_{\ti{\vinner}},\allowbreak \mathbf{q}_{\ti{\lf{M}}}$.
            If the difference $\ti{\lf{M}}-\ti{\vinner}$ exceeds any
            seen so far, record this as the new best solution.
        \end{enumerate}
    \end{enumerate}
\end{algorithm}

\begin{figure}[t]
    \centering
    \setlength{\fboxsep}{0.5\baselineskip}
    \framebox{\begin{minipage}{0.9\textwidth}
    \begin{algorithmic}[1]
        \State $M \gets [\,]$
            \Comment{empty list}
        \State{$\vsbest \gets 0$, $\vtbest \gets 0$}
            \Comment{best solution so far}
        \State{$\vnext \gets 0$}
            \Comment{next element of $O$ to scan through}

        \Statex

        \ForAll{$\vinner \in I$, moving forward through $I$}
            \While{$\vnext < \lz{O}$ and
                    $\ty{O[\vnext]} \geq \ty{\vinner}$}
                    \label{line-enter-start}
                \State{$\vnext \gets \vnext + 1$}
                \While{$\lz{M} > 0$ and $\ti{O[\vnext]} > \ti{\lb{M}}$}
                        \label{line-redundant-start}
                    \State{Remove the last triple from $M$}
                    \EndWhile
                \State{Append $O[\vnext]$ to the end of $M$}
            \EndWhile \label{line-enter-end}
            \While{$\lz{M} > 0$ and $\tx{\lf{M}} < \tx{\vinner}$}
                    \label{line-exit-start}
                \State{Remove the first triple from $M$}
            \EndWhile \label{line-exit-end}
            \If{$\ti{\lf{M}} - \ti{\vinner} > \vtbest - \vsbest$}
                \State{$\vsbest \gets \ti{\vinner}$}
                \State{$\vtbest \gets \ti{\lf{M}}$}
            \EndIf
        \EndFor
    \end{algorithmic}
    \end{minipage}}
    \caption{The pseudocode for Algorithm~\ref{a-window}}
    \label{pc-window}
\end{figure}

\begin{theorem} \label{t-range-linear}
    Algorithms~\ref{a-frontier}
    and \ref{a-window} together solve Problem~\ref{p-range-dom}
    in $\OO(n)$ time and $\OO(n)$ space.
\end{theorem}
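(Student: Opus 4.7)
The plan is to reduce Problem~\ref{p-range-dom} to the sliding-window scan of Algorithm~\ref{a-window}, verify an invariant on the auxiliary list $M$ that makes $\lf{M}$ the correct answer for each $\vinner$, and then observe that the while loops amortise to $\OO(n)$. Theorem~\ref{t-frontier} already delivers $I$ and $O$ in $\OO(n)$ time and space, and the lemma preceding Algorithm~\ref{a-frontier} guarantees that any optimal pair has $\mathbf{q}_s \in I$ and $\mathbf{q}_t \in O$, so it suffices to show that for each $\vinner \in I$ the algorithm locates the $\vouter \in O$ of greatest index that dominates $\vinner$.

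Because $I$ and $O$ are strictly increasing in $x$ and, after the tie-breaking in lines~\ref{line-domi-start}--\ref{line-domi-end} and \ref{line-domo-start}--\ref{line-domo-end} of Figure~\ref{pc-frontier}, strictly decreasing in $y$, advancing $\vinner$ through $I$ causes $\tx{\vinner}$ to grow and $\ty{\vinner}$ to shrink. Hence the window $W = \{\vouter \in O : \vouter \text{ dominates } \vinner\}$ is always an index-contiguous range in $O$ whose endpoints can only move forward: new members appear at the tail (accessed by the monotone pointer $\vnext$) and exiting members leave from the head. The outer while loop that consumes fresh $O[\vnext]$ and the exit loop on $\lf{M}$ mirror this two-sided drift.

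The core correctness argument is an invariant, which I would prove by induction on iterations of the outer loop: $M$ is exactly the sub-list of $W$ consisting of triples $w$ for which no $w' \in W$ has both $\tx{w'} > \tx{w}$ and $\ti{w'} > \ti{w}$, and $M$ is stored in increasing-$x$ (equivalently, decreasing-index) order. Granted this, $\lf{M}$ is the desired index-maximiser over $W$, so the solution-update step is correct. Preservation follows from three observations: a newly entering $\vouter$ has strictly greater $x$ than every current $m \in M$, so precisely those $m$ with $\ti{m} < \ti{\vouter}$ lose their potential-maximum status, and by the decreasing-index ordering they sit contiguously at the tail of $M$ (handled by the inner while loop starting at line~\ref{line-redundant-start}); a triple leaving $W$, if it lies in $M$ at all, sits at the head, where lines~\ref{line-exit-start}--\ref{line-exit-end} remove it; and every other member of $W$ retains its status.

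For complexity, each element of $O$ is appended to $M$ at most once and removed at most once, so the inner while loops contribute $\OO(|O|)$ in aggregate; the pointer $\vnext$ advances monotonically through $O$; the outer loop runs $|I|$ times with $\OO(1)$ work outside the while loops. Combined with the $\OO(n)$ from Algorithm~\ref{a-frontier}, this gives the claimed $\OO(n)$ time and $\OO(n)$ space. The main obstacle I anticipate is stating the $M$-invariant crisply enough that its preservation under all three update steps within one iteration is transparent; once that is done, both the correctness of $\lf{M}$ as the window's index-maximiser and the amortised bound follow directly.
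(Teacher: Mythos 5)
Your proposal is correct and follows essentially the same route as the paper: the paper's proof likewise delegates the construction of $I$ and $O$ to Theorem~\ref{t-frontier}, relies on the preceding discussion of the sliding window $W$ and the potential-maxima sub-list $M$ for correctness, and uses the identical amortised argument (each triple of $O$ enters and leaves $M$ at most once) for the $\OO(n)$ bound. You merely spell out the $M$-invariant and its preservation more explicitly than the paper does, which is a welcome elaboration rather than a different approach.
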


\begin{proof}
    Theorem~\ref{t-frontier} analyses Algorithm~\ref{a-frontier},
    and the preceding discussion shows the correctness of
    Algorithm~\ref{a-window}.  All that remains is to verify that
    Algorithm~\ref{a-window} runs in $\OO(n)$ time and space.

    Each triple $t \in O$ is added to $M$ at most once and removed from
    $M$ at most once, and so the \textbf{while} loops on lines
    \ref{line-enter-start}, \ref{line-redundant-start} and
    \ref{line-exit-start}
    each require \emph{total} $\OO(n)$ time as measured
    across the entire algorithm.
    Finally, the outermost \textbf{for} loop (line~4) iterates at most
    $n+1$ times, giving an overall running time for
    Algorithm~\ref{a-window} of $\OO(n)$.

    Each of the lists $I$, $O$ and $M$ contains at most $n+1$ elements,
    and so the space complexity is $\OO(n)$ also.
\end{proof}


\section{Performance} \label{s-impl}

Here we experimentally compare our new algorithm
against the
prior state of the art, namely the $\OO(n \log n)$ algorithm of
Hsieh et~al.\ \cite{hsieh08-interval}.
Our trials involve searching for GC-rich regions in the human genome assembly
\emph{GRCh37.p2} from \emph{GenBank} \cite{benson08-genbank,ihgsc04-human}.
The implementation that we use for our new algorithm is available online,%
\footnote{For {\cpp} implementations of all algorithms in this paper,
    visit \srcurl.}
and the code for the prior $\OO(n\log n)$ algorithm was downloaded from the
respective authors' website.\footnote{%
    The implementation of the prior algorithm \cite{hsieh08-interval}
    is taken from \url{http://venus.cs.nthu.edu.tw/~eric/FIF.htm}.}
Both implementations are written in C/{\cpp}.

\begin{figure}[htb]
    \centering
    \includegraphics[scale=0.5]{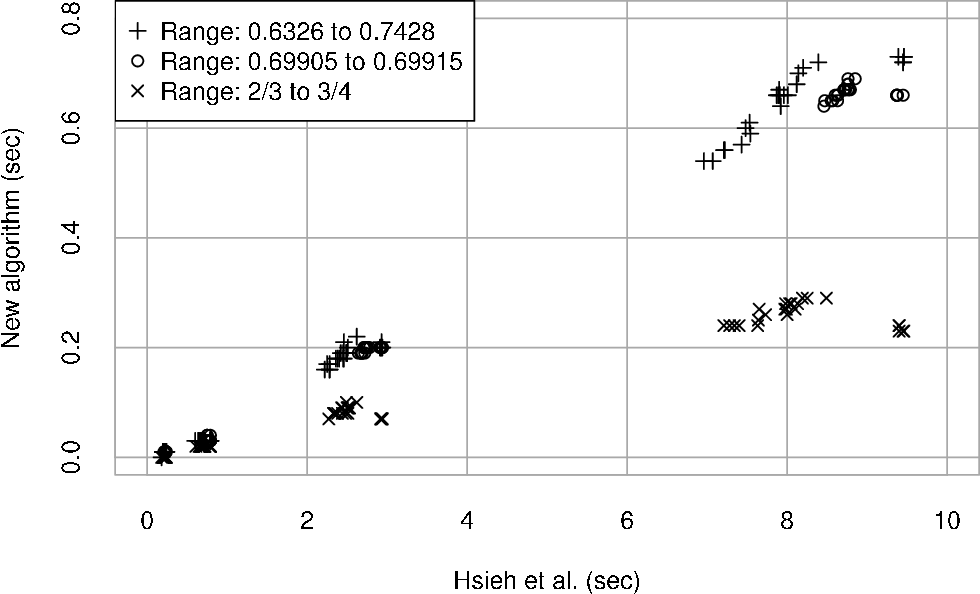}
    \caption{Performance comparisons on genomic data}
    \label{fig-longest-cmp}
\end{figure}

Figure~\ref{fig-longest-cmp} measures running times for
$24 \times 4 \times 3 = 288$ instances of Problem~\ref{p-range}:
we begin with 24 human chromosomes (1--22, X and Y),
extract initial strings of four different lengths $n$
(ranging from $n=100\,000$ to $n=3\,000\,000$),
and search each for the longest substring whose GC-density
is constrained according to one of three different ranges
$[\theta_1,\theta_2]$.

These ranges are:
$[0.6326,\,0.7428]$, which matches the first CpG island class of
Ioshikhes and Zhang \cite{ioshikhes00-mapping};
$[0.69905,\,0.69915]$, which surrounds the median of this
class and measures performance for a narrow density range;
and $[2/3,\,3/4]$, which measures performance when the key
parameters $d_1$ and $d_2$ are very small.

The results are extremely pleasing: in every case the new algorithm runs
at least $10 \times$ faster than the prior state of the art, and
in some cases up to $42 \times$ faster.
Of course such comparisons cannot be exact or fair, since the two
implementations
are written by different authors; however, they do illustrate that the
new algorithm is not just asymptotically faster in theory (as proven in
Theorem~\ref{t-range-linear}), but also fast in practice
(i.e., the constants are not so large as to eliminate the
theoretical benefits for reasonable inputs).

The results for the range $[2/3,\,3/4]$ highlight
how our algorithm benefits from small denominators
(in which the range of possible $x$ and $y$ coordinates becomes much smaller).

Memory becomes a significant problem when dealing with very large data sets.
The algorithm of Hsieh et~al.\ \cite{hsieh08-interval} uses ``heavy'' data
structures with large memory requirements:
for $n=3\,000\,000$ it uses $1.64$\,GB of memory.
In contrast, our new algorithm has a much smaller
footprint---just $70$\,MB for the same $n$---and can thereby
process values of $n$ that are orders of magnitude larger.
In Figure~\ref{fig-longest-human} we run our algorithm over
the full length of each chromosome; even the worst case
(chromosome~1) with $n=249\,250\,621$ runs for all density ranges
in under 85 seconds, using $5.6$\,GB of memory.

\begin{figure}[htb]
    \centering
    \includegraphics[scale=0.5]{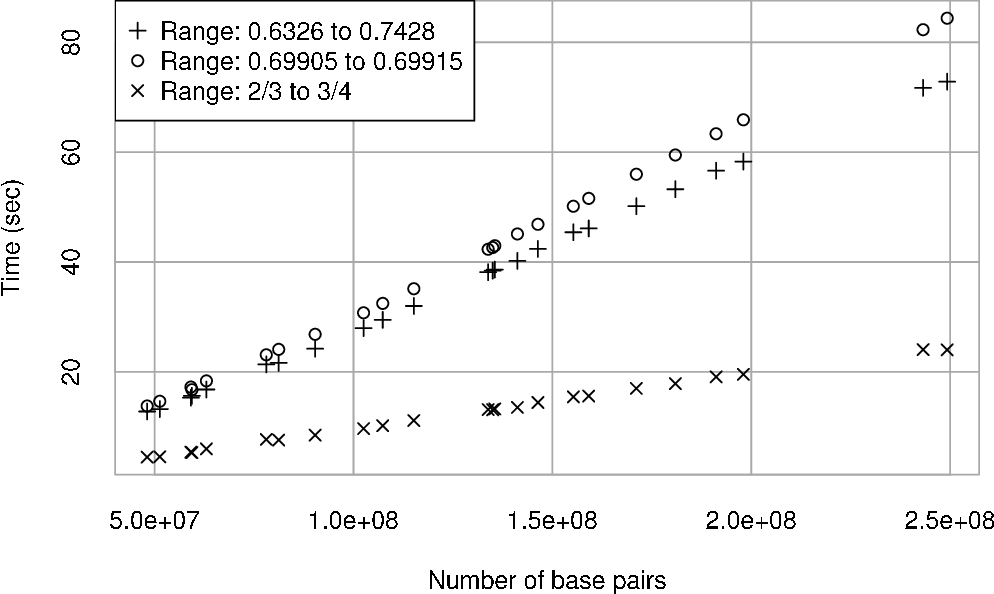}
    \caption{Performance of the new algorithm on full-length chromosomes}
    \label{fig-longest-human}
\end{figure}

All trials were run on a single 3\,GHz Intel Core i7 CPU.
Input and output are included in running times, though detailed
measurements show this to be insignificant for both algorithms
(which share the same input and output routines).


\section{Related Problems} \label{s-related}

The techniques described in this paper extend beyond Problem~\ref{p-range}.
Here we examine two related problems from the
bioinformatics literature, and for each we
outline new algorithms that improve upon the prior state of the art.
As usual, all problems take an input string $z_1,\ldots,z_n$ where
each $z_i$ is 0 or 1.

The new algorithms in this section rely on
\emph{van Emde Boas trees} \cite{vanemdeboas77-pqueue},
a tree-based data structure for which many elementary key-based
operations have $\OO(n \log \log n)$ time complexity.
We briefly review this data structure before presenting the two
related problems and the new algorithms to solve them.


\subsection{van Emde Boas trees} \label{s-veb}

Here we briefly recall the essential ideas behind van Emde Boas trees.
For full details we refer the author to a modern textbook
on algorithms such as \cite{cormen09-algorithms}.

A \emph{van Emde Boas tree} is a data structure that implements an associative
array (mapping keys to values), in which the user can perform several
elementary operations in $\OO(\log m)$ time, where $m$ is the number of
\emph{bits} in the key.
These elementary operations include inserting or deleting a
key-value pair, looking up the value stored for a given key,
and looking up the successor or predecessor of a given key $k$
(i.e., the first key higher or lower than $k$ respectively).
In our case, all keys are in the range $[-n^2,n^2]$ (Lemma~\ref{l-int}),
and so $m \in \OO(\log n^2) = \OO(\log n)$;
that is, these elementary operations run in
$\OO(\log \log n)$ time.

The core idea of this data structure is that each node of the tree represents a
range of $p$ consecutive possible keys for some $p$, and has $\sqrt{p}$
children (each a smaller van Emde Boas tree) that each represent a sub-range
of $\sqrt{p}$ possible keys.  Each node also maintains the minimum and maximum
keys that are actually present within its range.
The root node of the tree represents the complete range of $2^m$ possible keys.

Furthermore, for each node $V$ of the tree representing a range of $p$
possible keys, we also maintain an \emph{auxiliary} van Emde Boas tree
that stores which of the $\sqrt{p}$ children of $V$ are non-empty (i.e.,
have at least one key stored within them).

To look up the successor of a given key $k$ we travel down the tree,
and each time we reach some node $V_i$ that represents $p_i$ potential keys,
we identify which of the $\sqrt{p_i}$ children contains the successor of $k$
by examining the auxiliary tree attached to $V_i$.
This induces a query on the auxiliary tree (representing $\sqrt{p_i}$
potential non-empty child trees) followed by a query on the selected child
(representing $\sqrt{p_i}$ potential keys), and so the running time
follows a recurrence of the form $T(m) = 2T(m/2) + O(1)$;
solving this recurrence yields the overall $\OO(log(m))$ time complexity.
The running times for inserting and deleting keys follow a similar
argument, and again we refer the reader to a text such as
\cite{cormen09-algorithms} for the details.


\subsection{Shortest substring in a density range} \label{s-shortest}

The first related problem that we consider is a natural counterpart to
Problem~\ref{p-range}: instead of searching for the longest substring
under given density constraints, we search for the \emph{shortest}.

\begin{problem} \label{p-shortest}
    Find the shortest substring whose density lies in a given
    range.  That is, given rationals $\theta_1 < \theta_2$, compute
    \[ \min_{1 \leq a \leq b \leq n} \left\{
        L(a,b)\,|\,\theta_1 \leq D(a,b) \leq \theta_2 \right\}. \]
\end{problem}

The best known algorithm for this problem runs in $\OO(n \log n)$ time
\cite{hsieh08-interval}; here we improve this to $\OO(n \log \log n)$.

By Theorem~\ref{t-density-dom} and Lemma~\ref{l-moveout},
this is equivalent to finding points $\mathbf{q}_s \neq \mathbf{q}_t$
for which $\mathbf{q}_t$ dominates $\mathbf{q}_s$ and for which $t-s$ is as
\emph{small} as possible.  To do this, we iterate through each possible
endpoint $\mathbf{q}_t$ in turn, and maintain a \emph{partial outer frontier}
$P$ consisting of all non-dominated points amongst the previous points
$\{\mathbf{q}_0,\mathbf{q}_1,\ldots,\mathbf{q}_{t-1}\}$;
that is, all points $\mathbf{q}_i$ ($0 \leq i \leq t-1$)
that are not dominated by some other
$\mathbf{q}_j$ ($i < j \leq t-1$).
When examining a candidate endpoint $\mathbf{q}_t$, it is
straightforward to show that any optimal solution
$\mathbf{q}_s,\mathbf{q}_t$ must satisfy $\mathbf{q}_s \in P$.

\begin{algorithm} \label{a-shortest}
    To solve Problem~\ref{p-shortest}:
    \begin{enumerate}
        \item Initialise $P$ to the empty list.
        \item For each $t=0,\ldots,n$ in turn, try $\mathbf{q}_t$ as a
        possible endpoint:
        \begin{enumerate}
            \item \emph{Update the solution.}
            To do this, walk through all points $\mathbf{q}_s \in P$ that
            are dominated by $\mathbf{q}_t$.  If the difference
            $t-s$ is smaller than any seen so far, record this as the
            new best solution.

            \item \emph{Update the partial frontier}.
            To do this, remove all $\mathbf{q}_s \in P$
            that are dominated by $\mathbf{q}_t$, and then insert
            $\mathbf{q}_t$ into $P$.
        \end{enumerate}
    \end{enumerate}
\end{algorithm}

The key to a fast time complexity is choosing an
efficient data structure for storing the partial outer frontier $P$.
We keep $P$ sorted by increasing $x$ coordinate,
and for the underlying data structure
we use a \emph{van Emde Boas tree} \cite{vanemdeboas77-pqueue}.

To walk through all points $\mathbf{q}_s \in P$ that are dominated by
$\mathbf{q}_t$ in step~2a, we locate the point $\mathbf{p} \in P$
with smallest $x$ coordinate larger than $x_t$; a van Emde Boas tree
can do this in $\OO(\log \log n)$ time.  The dominated
points $\mathbf{q}_s$ can then be found immediately prior to
$\mathbf{p}$ in the partial frontier.
Adding and removing points in step~2b is likewise $\OO(\log \log n)$ time,
and the overall space complexity of a van Emde Boas tree can be made $\OO(n)$
\cite{cormen09-algorithms}.

A core requirement of this data structure is that keys in the tree can
be described by integers in the range $0,\ldots,n$.  To arrange this, we
pre-sort the $x$ coordinates
$\tx{\mathbf{q}_0},\ldots,\tx{\mathbf{q}_n}$ using an $\OO(n)$ two-phase
radix sort (as described in Section~\ref{s-radix}) and then replace each
$x$ coordinate with its corresponding rank.

To finalise the time and space complexities, we observe that
each point $\mathbf{q}_s \in P$ that is processed
in step~2a is immediately removed in step~2b, and so no point is processed more
than once.  Combined with the preceding discussion, this gives:

\begin{theorem} \label{t-shortest}
    Algorithm~\ref{a-shortest} runs in $\OO(n \log \log n)$ time
    and uses $\OO(n)$ space.
\end{theorem}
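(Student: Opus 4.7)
The plan is to verify in turn the correctness, the time complexity, and the space complexity of Algorithm~\ref{a-shortest}, having already established in the preceding discussion that any optimal pair $(\mathbf{q}_s,\mathbf{q}_t)$ must have $\mathbf{q}_s$ in the partial outer frontier $P$ maintained up to iteration $t$. Correctness then follows immediately by induction on $t$: the invariant that $P$ equals the set of non-dominated points in $\{\mathbf{q}_0,\ldots,\mathbf{q}_{t-1}\}$ is preserved by step~2b, and step~2a examines every candidate $\mathbf{q}_s\in P$ that is dominated by $\mathbf{q}_t$, so the global minimum is attained by the moment $t$ reaches the optimal endpoint.

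For the data structure, the preprocessing step replaces each $x$-coordinate by its rank among $\tx{\mathbf{q}_0},\ldots,\tx{\mathbf{q}_n}$, computed by a two-phase radix sort in $\OO(n)$ time and space via Lemma~\ref{l-int} and Section~\ref{s-radix}; this ensures that all keys stored in the van Emde Boas tree lie in $\{0,1,\ldots,n\}$, so each of insertion, deletion, and successor/predecessor lookup runs in $\OO(\log\log n)$ time.

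For the running time, I would charge operations in two groups. In iteration $t$ the algorithm performs one successor query (to find $\mathbf{p}\in P$ with smallest $x$ larger than $x_t$) and one insertion of $\mathbf{q}_t$, contributing $\OO(\log\log n)$ per iteration and hence $\OO(n\log\log n)$ overall. The removals in steps~2a/2b look potentially more expensive per iteration, and this is where an amortized argument is needed: since each $\mathbf{q}_i$ is inserted into $P$ at most once over the entire algorithm, it is removed at most once, so the total number of removals across all iterations is at most $n+1$, each costing $\OO(\log\log n)$, for a cumulative $\OO(n\log\log n)$. Walking leftward from $\mathbf{p}$ through the dominated predecessors in step~2a is also charged to these removals, since every point visited is immediately deleted. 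Combining with the $\OO(n)$ preprocessing yields the overall $\OO(n\log\log n)$ bound.

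For the space bound, the radix-sort preprocessing uses $\OO(n)$ space, and the van Emde Boas tree over a universe of size $\OO(n)$ can be implemented in $\OO(n)$ space (see \cite{cormen09-algorithms}); at any moment $P$ holds at most $n+1$ triples, so no auxiliary storage exceeds $\OO(n)$. The only step I would double-check in detail is the amortized accounting for the inner while-loop in step~2a, since it must be shown that each visited candidate is indeed dominated (so that removing it does not destroy correctness) and that no point is both visited and left in $P$; this follows because the leftward walk stops precisely at the first $\mathbf{q}_s\in P$ whose $y$-coordinate exceeds $y_t$, mirroring the removal rule used in step~2b.
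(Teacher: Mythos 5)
Your proposal is correct and follows essentially the same route as the paper: rank-reduce the $x$-coordinates via the $\OO(n)$ two-phase radix sort so the van Emde Boas tree operates over a universe of size $\OO(n)$, charge each successor query and insertion at $\OO(\log\log n)$ per iteration, and amortise the leftward walk and deletions by noting that every point visited in step~2a is immediately removed in step~2b and hence processed at most once. Your additional check that the walk terminates at the first $\mathbf{q}_s$ with $y$-coordinate exceeding $y_t$ is a valid (and slightly more explicit) justification of the same accounting the paper uses.
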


We can add an optional \emph{length constraint} to
Problem~\ref{p-shortest}: given rationals $\theta_1 < \theta_2$ and
length bounds $L_1 < L_2$, find the shortest substring $z_a,\ldots,z_b$
for which $\theta_1 \leq D(a,b) \leq \theta_2$ and
$L_1 \leq L(a,b) \leq L_2$.  This is a simple modification to
Algorithm~\ref{a-shortest}: we redefine the partial frontier
$P$ to be the set of all non-dominated points amongst
$\{\mathbf{q}_0,\mathbf{q}_1,\ldots,\mathbf{q}_{t-L_1}\}$.
The update procedure changes slightly, but the
$\OO(n \log \log n)$ running time remains.


\subsection{Maximal collection of substrings in a density range}

The second related problem involves searching for substrings ``in
bulk'': instead of finding the longest substring under given density
constraints, we find the \emph{most} disjoint substrings.

\begin{problem} \label{p-multiple}
    Find a maximum cardinality set of disjoint substrings
    whose densities all lie in a given range.
    That is, given rationals $\theta_1 < \theta_2$, find substrings
    $(z_{a_1},\ldots,z_{b_1})$,
    $(z_{a_2},\ldots,z_{b_2})$,
    \ldots,
    $(z_{a_k},\ldots,z_{b_k})$
    where $\theta_1 \leq D(a_i,b_i) \leq \theta_2$ and
    $b_i < a_{i+1}$ for each $i$,
    and where $k$ is as large as possible.
\end{problem}

As before, the best known algorithm runs in
$\OO(n \log n)$ time \cite{hsieh08-interval};
again we improve this bound to $\OO(n \log \log n)$.

For this problem we mirror the greedy approach of
Hsieh et~al.\ \cite{hsieh08-interval}.
One can show that, if $z_a,\ldots,z_b$ is a substring
of density $\theta_1 \leq D(a,b) \leq \theta_2$ with \emph{minimum
endpoint} $b$, then some optimal solution to Problem~\ref{p-multiple}
has $b_1=b$ (i.e., we can choose $z_a,\ldots,z_b$ as our first
substring).  See \cite[Lemma~6]{hsieh08-interval}.

Our strategy is to use our previous
Algorithm~\ref{a-shortest} to locate such a substring $z_a,\ldots,z_b$,
store this as part of our solution, and then rerun our algorithm on the
leftover $n-b$ input digits $z_{b+1},\ldots,z_n$.  We repeat this process
until no suitable substring can be found.

\begin{algorithm} \label{a-multiple}
    To solve Problem~\ref{p-multiple}:
    \begin{enumerate}
        \item Initialise $i \gets 1$.

        \item \label{en-leftover}
        Run Algorithm~\ref{a-shortest} on the input string
        $z_i,\ldots,z_n$, but terminate Algorithm~\ref{a-shortest}
        as soon as \emph{any}
        dominating pair $\mathbf{q}_s,\mathbf{q}_t$ is found.

        \item If such a pair is found: add the corresponding
        substring $z_{s+1},\ldots,z_t$ to our solution set,
        set $i \gets t+1$, and return to step~\ref{en-leftover}.
        Otherwise terminate this algorithm.
    \end{enumerate}
\end{algorithm}

It is important to reuse the same van Emde Boas tree on each run through
Algorithm~\ref{a-shortest} (simply empty out the tree each time),
so that the total initialisation cost remains $\OO(n\log\log n)$.

\begin{theorem}
    Algorithm~\ref{a-multiple} runs in $\OO(n \log \log n)$ time
    and requires $\OO(n)$ space.
\end{theorem}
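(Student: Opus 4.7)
The plan is to bound the total work done across all calls to Algorithm~\ref{a-shortest} by a disjointness argument, together with an accounting of the one-time preprocessing and the reuse of the van Emde Boas tree.

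First I would set up notation. Let $(a_1,b_1),(a_2,b_2),\ldots,(a_k,b_k)$ denote the substrings produced by the algorithm, with $b_0=0$, so that the $j$-th call to Algorithm~\ref{a-shortest} operates on the suffix $z_{b_{j-1}+1},\ldots,z_n$ and terminates as soon as it identifies the shortest-endpoint dominating pair, which corresponds to $(a_j,b_j)$. The final (unsuccessful) call processes the remaining suffix $z_{b_k+1},\ldots,z_n$ without finding any further pair. The correctness of the greedy choice is inherited from \cite[Lemma~6]{hsieh08-interval}, as cited in the paper; no additional argument is required there.

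Next I would bound the running time. Algorithm~\ref{a-shortest} spends $\OO(\log\log n)$ time per endpoint $\mathbf{q}_t$ it examines: a predecessor/successor query followed by amortised-constant removals and one insertion in the van Emde Boas tree. In the $j$-th call the early-termination rule means the algorithm examines at most $b_j-b_{j-1}$ endpoints (the endpoints $t=1,\ldots,b_j-b_{j-1}$ relative to the suffix, since the first pair found necessarily has endpoint $b_j$ in the original indexing), and the final call examines at most $n-b_k$ endpoints. Summing telescopically gives at most $n$ endpoints processed in total, so the total work across all calls is $\OO(n\log\log n)$.

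Then I would address the shared preprocessing. The two-phase radix sort that ranks the $x$-coordinates $\tx{\mathbf{q}_0},\ldots,\tx{\mathbf{q}_n}$ into the range $[0,n]$, required to use the van Emde Boas tree with word-sized keys, is performed once on the full input in $\OO(n)$ time and space (Section~\ref{s-radix}, Lemma~\ref{l-int}). Crucially, as noted after the algorithm, the same van Emde Boas tree is reused across iterations: between successive calls the tree is emptied by explicitly deleting its current contents, which costs $\OO(\log\log n)$ per stored key. Since the number of keys ever inserted is bounded by the number of endpoints processed, the emptying cost telescopes into the same $\OO(n\log\log n)$ bound. A naive reinitialisation per call would have cost $\OO(n)$ per restart, giving $\OO(nk)$ in the worst case; this is the only subtlety, and the reuse trick avoids it.

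Finally, the space bound follows because the van Emde Boas tree over $[0,n]$ uses $\OO(n)$ space \cite{cormen09-algorithms}, the ranking tables use $\OO(n)$ space, and the output list of accepted substrings has total description size $\OO(n)$. Combining the time and space bounds yields the stated $\OO(n\log\log n)$ time and $\OO(n)$ space. The only real obstacle is handling initialisation correctly, and it is resolved entirely by the reuse-and-empty strategy described above.
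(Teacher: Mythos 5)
Your proof is correct and follows essentially the same route as the paper: the same telescoping bound $\OO(b_1\log\log n + [b_2-b_1]\log\log n + \cdots) = \OO(n\log\log n)$ over the successive calls to Algorithm~\ref{a-shortest}, combined with the reuse of a single van Emde Boas tree to avoid repeated initialisation. Your treatment is somewhat more explicit than the paper's (you account for the final unsuccessful call, the one-time radix-sort preprocessing, and the cost of emptying the tree between calls), but these are elaborations of the same argument rather than a different approach.
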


\begin{proof}
    Running Algorithm~\ref{a-shortest} in step~\ref{en-leftover}
    takes $\OO([t-i] \allowbreak \log \log n)$ time,
    since it only examines points
    $\mathbf{q}_i,\ldots,\mathbf{q}_t$ before the dominating pair is found.
    The total running time of Algorithm~\ref{a-multiple} is therefore
    $\OO(b_1 \log \log n + [b_2-b_1] \log \log n + \ldots +
    [b_k-b_{k-1}] \log \log n) = \OO(n \log \log n)$.
    The $\OO(n)$ space complexity follows from
    Theorem~\ref{t-shortest} plus the observation that the final
    solution set can contain at most $n$ disjoint substrings.
\end{proof}

As before, it is simple to add a length constraint to
Problem~\ref{p-multiple}, so that each substring
$z_{a_i},\ldots,z_{b_i}$ must satisfy both
$\theta_1 \leq D(a_i,b_i) \leq \theta_2$ and
$L_1 \leq L(a_i,b_i) \leq L_2$ for some length bounds $L_1 < L_2$.
We simply incorporate the length constraint into
Algorithm~\ref{a-shortest} as described in Section~\ref{s-shortest},
and nothing else needs to change.


\subsection{Performance}

As before, we have implemented both Algorithms~\ref{a-shortest}
and~\ref{a-multiple}
and tested each against the prior
state of the art using human genomic data, following the same
procedures as described in Section~\ref{s-impl}.
Our van Emde Boas tree implementation is based on
the MIT-licensed \emph{libveb} by Jani Lahtinen, available from
http://code.google.com/p/libveb/.

For the shortest substring problem, our algorithm runs
at 12--62 times the speed of the prior algorithm
\cite{hsieh08-interval}, and requires
under $1/50$th of the memory.
For finding a maximal collection of substrings,
our algorithm runs at 0.94--13 times the speed of
the prior algorithm
\cite{hsieh08-interval}, and uses less than $1/20$th of the memory.%
    \footnote{The few cases in which our algorithm was slightly slower
    (down to $0.94$ times the speed) all involved large denominators $d_1,d_2$
    and maximal collections involving a very large number of
    very short substrings.}

Once again, these improvements---particularly for memory usage---allow
us to run our algorithms with significantly larger values of $n$
than for the prior state of the art.
For chromosome~1 with $n=249\,250\,621$, the two algorithms run in
221 and 176 seconds respectively, both using approximately 5.6\,GB of
memory.


\section{Discussion} \label{s-discussion}

In this paper we consider three problems involving the identification
of regions in a sequence where some feature occurs within a given density
range.  For all three problems we develop new algorithms that offer
significant performance improvements over the prior state of the art.
Such improvements are critical for disciplines such as bioinformatics
that work with extremely large data sets.

The key to these new algorithms is the ability to exploit
\emph{discreteness} in the input data.  All of the applications we
consider (GC-rich regions, CpG islands and sequence alignment) can
be framed in terms of sequences of 1s and 0s.  Such discrete
representations are powerful: through Lemma~\ref{l-int},
they allow us to perform $\OO(n)$ two-phase radix sorts, and
to reindex coordinates by rank for van Emde Boas trees.
In this way, discreteness allows us to circumvent the
theoretical $\Omega(n \log n)$ bounds of
Hsieh et~al.\ \cite{hsieh08-interval},
which are only proven for the more general continuous (non-discrete) setting.

In a discrete setting, an obvious lower bound for all three problems
is $\Omega(n)$ time (which is required to read the
input sequence).  For the first problem (longest substring with density
in a given range), we attain this best possible lower bound with our
$\OO(n)$ algorithm.
This partially answers a question of Chen and Chao \cite{chen05-optimal},
who ask in a more general setting whether such an algorithm is possible.

For the second and third problems (shortest substring and maximal
collection of substrings), although our $\OO(n \log \log n)$ algorithms
have smaller time complexity and better practical performance than the
prior state of the art, there is still room for improvement before we
reach the theoretical $\Omega(n)$ lower bound (which may or may not
be possible).
Further research into these questions may prove fruitful.

The techniques we develop here have applications beyond those discussed
in this paper.  For example, consider the problem of finding the
longest substring whose density matches a \emph{precise} value $\theta$.
This close relative of Problem~\ref{p-range} has
cryptographic applications \cite{boztas09-density}.  An $\OO(n)$ algorithm
is known \cite{burton11-density}, but it requires
a complex linked data structure.  By adapting and simplifying
Algorithms~\ref{a-frontier} and~\ref{a-window} for the case
$\theta_1=\theta_2=\theta$, we obtain a new $\OO(n)$ algorithm with
comparable performance and a much simpler implementation.

This last point raises the question of whether
our algorithms for the second and third problems can likewise be
simplified to use only simple array-based sorts and scans instead of the
more complex van Emde Boas trees.
Further research in this direction may yield new practical
improvements for these algorithms.


\section*{Acknowledgements}

This work was supported by the
Australian Research Council (grant number DP1094516).

\small
\bibliographystyle{amsplain}
\bibliography{pure,biology}

\vspace{1cm}
\noindent
Benjamin A.\ Burton \\
School of Mathematics and Physics, The University of Queensland \\
Brisbane QLD 4072, Australia \\
(bab@maths.uq.edu.au)

\bigskip
\noindent
Mathias Hiron \\
Izibi, 7, rue de Vaugrenier \\
89190 St Maurice R.H, France \\
(mathias.hiron@gmail.com)

\end{document}